\DeclareMathOperator*{\argmin}{arg\,min}
\theoremstyle{plain}
\newtheorem{thm}{\protect\theoremname}
\providecommand{\theoremname}{Theorem}
\begin{document}
\sloppy

\def\spacingset#1{\renewcommand{\baselinestretch}%
{#1}\small\normalsize} \spacingset{1}
%% \fntext[label3]{}

\title{Estimation of Predictive Performance in High-Dimensional Data Settings using Learning Curves}

 \author{Jeroen M. Goedhart\footnote{Corresponding author, E-mail address: j.m.goedhart@amsterdamumc.nl} \textsuperscript{a}, Thomas Klausch\textsuperscript{a}, Mark A. van de Wiel\textsuperscript{a}\\
\textsuperscript{a}\small{Department of Epidemiology and Data Science, Amsterdam University Medical Centers}}
\maketitle

\begin{abstract}
In high-dimensional prediction settings, it remains challenging to
reliably estimate the test performance. To address this challenge,
a novel performance estimation framework is presented. This framework,
called Learn2Evaluate, is based on learning curves by fitting a smooth monotone
curve depicting test performance as a function of the sample size.
Learn2Evaluate has several advantages compared to commonly applied performance
estimation methodologies. Firstly, a learning curve offers a graphical overview of a learner. This overview
assists in assessing the potential benefit of adding training samples 
and it provides a more complete comparison between learners than 
performance estimates at a fixed subsample size. Secondly, a learning curve
facilitates in estimating the performance at the total sample size rather than a subsample size. Thirdly, Learn2Evaluate allows the computation 
of a theoretically justified and useful lower confidence bound. Furthermore, this bound
may be tightened by performing a bias correction. The benefits of
Learn2Evaluate are illustrated by a simulation study and applications to omics data.
\end{abstract}
\noindent%
{\it Keywords:} High-dimensional data, Omics, Predictive performance, Area under the receiver operating curve, Bootstrap, Cross-validation
\vfill

\newpage
\spacingset{1.5}

\section{Introduction}
\label{sec:intro}

Nowadays, many learners can predict a response from high-dimensional
data. To select the most accurate learner, it is key to reliably estimate
the predictive performance with a confidence bound. The
predictive performance may be quantified by many different metrics.
For classification, examples include the accuracy, the area under
the receiver operating curve (AUC) \citep{hanley1982seAUC}, and the
brier score \citep{brier1950verification}, whereas for regression
one usually reports the mean square error of prediction (PMSE). 

It is well-known that performance estimates are optimistically biased when they 
are based on the same data used to fit the learner.
This optimism bias is even more apparent for high-dimensional learners, as such learners 
overfit quickly. Performance estimates should therefore be based
on independent test data. However, for many high-dimensional
prediction studies, a test set is unavailable. Therefore,
learning the model and estimating the performance should be based
on a single, often small-sized, data set. In this paper, we present
a novel methodology to estimate the predictive performance in such
a scenario.

The standard method for performance estimation in small sample size
settings is resampling such as repeated hold-out
\citep{burman1989comparative}, K-fold cross-validation \citep{stone1974crossvalidation},
or bootstrapping \citep{efron1994introduction}. These techniques repeatedly
partition or resample the available data in a training set of a fixed
size, to fit the model, and a test set, to estimate the performance.
An acknowledged challenge for these techniques is to choose the size
of the training sets. Training sets that are close in size to the
complete data set lead to an almost unbiased predictive performance
estimate. However, such training sets also induce a large variability of the performance estimate
because the test sets are small. Resampling techniques are, therefore,
faced with a bias-variance trade-off, which leads to conservative
and often useless confidence bounds for the performance. Advances
in these techniques improve the uncertainty estimates \citep{michiels2005prediction,jiang2008calculating},
but neither provide a theoretical justification nor a point estimate.

Another performance estimation strategy is to combine resampling with an asymptotic
or parametric method to construct a confidence bound. For the cross-validated
AUC, a confidence bound based on influence functions was derived \citep{LeDell2015ComputationallyEC},
and Monte Carlo simulation was combined with a parametric model to
construct a confidence bound for the cross-validated accuracy \citep{dobbin2009method}.
However, the former bound is too liberal for small sample sizes and
the latter method only applies to linear learners. Moreover, both
methods lack generality because they only apply to a specific predictive
performance metric.

To address the aforementioned issues, we present Learn2Evaluate, a
learning-curve-based performance estimation framework. A learning
curve combines resampling with smoothing by fitting a monotone curve
depicting the predictive performance on left-out samples as a function
of the size of the training set \citep{cortes1994learning,mukherjee2003estimating}.
The learning curve enables us to obtain a point estimate of the performance at the full sample size 
and to determine an optimal training set size to construct a conservative confidence bound. Additionally, this bound can be tightened by performing a bias correction. 
The idea of finding an optimal training set size builds on previous work by \cite{dobbin2011optimally}.
They employed learning curves to find a good training set size in
single split sample approach. In contrast, we consider multiple splits
into a training and test set, and we focus on a confidence bound. 

Learn2Evaluate has several advantages compared to other performance estimation
techniques. Firstly, Learn2Evaluate is generic and applies
to any learner and any performance metric.
Secondly, it renders a point estimate and a confidence bound. Importantly,
this confidence bound is theoretically justified, as we will prove
type-I error control, and it explicitly takes the size of the training
set into account.

In addition, because Learn2Evaluate is based on learning curves, it
automatically offers a graphical technique to study several aspects
of a learner. Firstly, the slope of the curve at the total sample
size allows a qualitative assessment of the benefit of future samples.
Secondly, learning curves provide a more complete comparison between
different learners than just performance point estimates at a fixed
training set size. Lastly, one may observe that prediction models
learn at different rates, which could lead to a deeper understanding
of learning behaviors. 

The remainder of this paper is organized as follows. In \autoref{sec:meth},
we describe Learn2Evaluate and provide a theoretical justification for the confidence
bound. We present simulation results in \autoref{sec:Simulation} and in 
\autoref{sec:Applications}, we illustrate the benefits of Learn2Evaluate on omics data.
We end with some conclusive remarks in \autoref{sec:Discussion}.

\section{Methods}
\label{sec:meth}

\subsection{Target parameter}

Let our data $D=\left\{ \left(\boldsymbol{X}_{i},Y_{i}\right)\right\} _{i=1}^{N}$
consist of $N$ observations of a $p$-dimensional feature vector
$\boldsymbol{X}_{i}$ with a corresponding response variable $Y_{i}.$
The goal in prediction problems is to fit a learner $\varphi_{D}$
on data $D$ that reliably predicts $Y$ from
$\boldsymbol{X}.$ After learner $\varphi_{D}$ is fitted, we wish to
evaluate the predictive performance $\rho_{D},$ i.e. the quality
of predictions of $\varphi_{D}$ of yet unseen data points. Subscript
$D$ indicates that $\rho_{D}$ is conditional on the data $D.$ 

\subsection{Predictive performance estimation}

To evaluate $\rho_{D},$ we report a point estimate $\hat{\rho}_{D}$
and a lower confidence bound $L_{D}\left(\alpha\right)$
for $\rho_{D}$ that satisfies $P\left(L_{D}\left(\alpha\right)\leq\rho_{D}\right)\geq1-\alpha.$
We consider the case where a large test data set is not available. Learning $\varphi_{D}$ and estimating
$\hat{\rho}_{D}$ and $L_{D}\left(\alpha\right)$ should therefore be
solely based on data $D$. 

To use $D$ efficiently, one may apply resampling techniques. These
techniques create multiple training sets, on which $\varphi$ is learned,
and complementary test sets, on which the estimates $\hat{\rho}_{D}$ and $L_{D}\left(\alpha\right)$
are based. Since $\varphi$ is now learned on a subset of $D,$ it
is more difficult to find the structure in $\boldsymbol{X}$ that
explains $Y.$ This often leads to a pessimistically biased estimate
of $\rho_{D}.$ Larger training sets reduce this bias, but also lead to smaller test sets, 
which in turn increases the variance of the performance estimate. Resampling-based 
estimators are therefore faced with a bias-variance trade-off.
This trade-off makes it particularly difficult to obtain a practical
confidence bound as such a bound depends on both the bias and the variance. 

To address the bias-variance trade-off, we present a novel learning-curve-based performance estimation method. 
In the following, we formally define the learning curve and discuss
how we use it for predictive performance estimation. Here, we distinguish between increasing 
performance metrics such as the AUC, which are metrics that increase when more signal is captured, and decreasing performance metrics 
such as the PMSE. Our method is presented for the AUC because classification is a frequent application 
in high-dimensional and small sample size settings. Extending our method to
decreasing performance metrics is straightforward and is explained for the PMSE.
Our method, which we call Learn2Evaluate, is summarized in \autoref{fig:Learning-Curve-methodology}.

\subsection{Definition of the learning curve}

A learning curve is constructed by first applying repeated hold-out estimation for several training
set sizes, followed by a smoothing step.

\subsubsection{Repeated hold-out estimation}
To define repeated hold-out estimation, we first introduce single hold-out
estimation. Let $s_{n}^{k}$ be the $k\mbox{th}$ realization of a 
random subset of $D$ with size $n\leq N$. This subset
is taken without replacement and is balanced with respect
to the response if $Y$ is binary. For continuous $Y,$ we do not impose the balancing restriction. 
We then define $\rho_{s_{n}^{k}}$ as
the predictive performance when $s_{n}^{k}$ is used to build a learner
$\varphi_{s_{n}^{k}}$. We estimate $\rho_{s_{n}^{k}}$ based on the
complement or hold-out set $D\setminus s_{n}^{k}.$ This yields the
single hold-out performance estimate $\hat{\rho}_{s_{n}^{k}}$  
and a lower confidence bound $L_{s_{n}^{k}}(\alpha).$
We assume that a method to construct $L_{s_{n}^{k}}(\alpha)$ exists. 
For the AUC, we employ the nonparametric method of \cite{delong1988comparing} implemented
in the R package pROC \citep{pROC}. For decreasing performance metrics, such as the PMSE, we estimate an upper confidence bound.

The single hold-out estimate depends on the subset $s_{n}^{k}$.
To reduce this dependency, we apply repeated hold-out estimation. We
define $\rho_{n}$ as the average performance over all $V$ possible subsets 
of $D$ of size $n\leq N,$
\[
\rho_{n}=\frac{1}{V}\sum_{k=1}^{V}\rho_{s_{n}^{k}},
\]
with $V=\binom{N}{n}$ for unrestricted sampling, and 
$V=\binom{N_{+}}{n_{+}}\binom{N_{-}}{n_{-}}$ for balanced sampling, with
subscript $+/-$ denoting the number of positive/negative cases of the given
(sub)sample size. The average performance is estimated by the repeated 
hold-out estimate $\hat{\rho}_{n}$, which is based on $K\leq V$ subsets,
\begin{equation}
\hat{\rho}_{n}=\frac{1}{K}\sum_{k=1}^{K}\hat{\rho}_{s_{n}^{k}}.\label{eq:repeated hold-out estimate}
\end{equation}
Repeated hold-out estimation also yields $K$ lower confidence bounds $L_{s_{n}^{k}}.$
In \hyperref[thm:proof of conf bound]{Theorem 1}, we prove how to aggregate these bounds to obtain a $(1-\alpha)$ confidence bound for $\rho_{D}.$

Conventional repeated hold-out estimation uses \eqref{eq:repeated hold-out estimate}
as a point estimate for the performance $\rho_{D}.$ A
learning curve, however, relies on first evaluating \eqref{eq:repeated hold-out estimate}
for a sequence of $J$ subsample sizes $n_{1}<\ldots<n_{J}$, yielding
a learning trajectory (\autoref{fig:Learning-Curve-methodology},
black dots), 
\begin{equation}
\left\{(n_{j},\hat{\rho}_{n_{j}})\right\} _{j=1}^{J}.\label{eq:learning trajectory}
\end{equation}
The learning curve is then obtained by smoothing this trajectory and the fit 
provides an estimate of $\rho_{D}.$

\subsubsection{Smoothing}
A learning curve $f(n)$ (\autoref{fig:Learning-Curve-methodology},
solid purple line) is a fit of the learning trajectory, i.e. \eqref{eq:learning trajectory}, 
rendering a relationship between the subsample size and the predictive performance. 
We assume that this relationship is monotonically increasing, 
\begin{equation}
\forall j>0:\,\rho_{n_{j}}\leq\rho_{n_{j+1}}.\label{eq:monotony}
\end{equation}
Assumption \eqref{eq:monotony} is a natural one:
the average performance of a learner trained on larger subsets is
likely higher than that of the same learner trained on smaller subsets.

In addition, we assume that the learning curve is
concave after some threshold subsample size $n_{tr}$,  $\forall n_{j}\geq n_{tr},j>0:\,\left(\rho_{n_{j+2}}-\rho_{n_{j+1}}\right)/\left(n_{j+2}-n_{j+1}\right)\leq\left(\rho_{n_{j+1}}-\rho_{n_{j}}\right)/\left(n_{j+1}-n_{j}\right).$
This assumption is employed to stabilize the curve. 
In some situations, the concavity assumption may not hold for the considered trajectory of subsample sizes, i.e. $n_{1}\leq n_{tr}.$
This can be assessed visually from the learning trajectory.
For example, in some situations, lasso regression is known to exhibit a phase transition \citep{PhaseTransitionDonoho}, 
which would lead to a S-shaped learning curve. For such an instance, we propose to first determine $n_{tr}$, i.e. the inflection point of the S-shaped curve, followed by fitting
a learning curve with smallest subsample size $n_{1}=n_{tr}.$ In \hyperref[sec:Supplements]{supplementary Section 1}, we show a method to determine $n_{tr}.$ 
Note that the case $n_{J}\leq n_{tr}$ 
is a strong indication for acquiring more samples before developing a learner.

For decreasing performance metrics, the assumptions should be reversed, leading to a
monotonically decreasing and convex curve.

To fit the learning curve, we consider a parametric model, based on
an inverse power law, and a nonparametric model, based on a constrained regression
spline. Use of power laws for fitting learning curves has been empirically
established \citep{cortes1994learning,mukherjee2003estimating,hess2010learning}.
We examine constrained regression splines because they do not rely
on a specific shape of $f,$ while the constraints prevent a too flexible
fit.

\paragraph{Inverse power law}\mbox{}\\
To fit an inverse power law, we perform nonlinear least squares regression, with
\begin{equation}
f(n)=\delta-\beta n^{-\gamma}.\label{eq:inverse power law}
\end{equation}
This function grows, with increasing subsample size $n$, asymptotically
to parameter $\delta$, with $0.5\leq\delta\leq1$ for the AUC. 
The learning rate $\beta$ and the decay rate $\gamma$ are positive \citep{cortes1994learning}.
Derivative inspection reveals that \eqref{eq:inverse power law}
satisfies the monotony and concavity assumptions. To estimate the
parameters $\delta,$ $\beta$, and $\gamma,$ we solve
\begin{equation}
\left(\hat{\delta},\hat{\beta},\hat{\gamma}\right)=\argmin_{\delta,\beta,\gamma}\frac{1}{J}\sum_{j=1}^{J}\left(\delta-\beta n_{j}^{-\gamma}-\hat{\rho}_{n_{j}}\right)^{2},\quad\mbox{s.t.}\enskip0.5\leq\delta\leq1,\enskip0\leq\beta,\enskip\mbox{and}\enskip0\leq\gamma, \label{eq:fitting the power law}
\end{equation}
by using the L-BFGS-B method \citep{byrd1995limited} of the base R optim function.
For decreasing metrics, we have $f(n)=\delta+\beta n^{-\gamma},$ with 
$0\leq\delta,\enskip0\leq\beta,\enskip\mbox{and}\enskip0\leq\gamma.$

\paragraph{Constrained regression spline}\mbox{}\\
To fit a constrained regression spline,
we employ an algorithm proposed by \cite{Ng2007} 
implemented in the R package COBS \citep{Ng2020}.
Briefly, the function $f$ is approximated by fitting piecewise polynomial
basis functions over different regions of the subsample size $n$.
These regions are specified by $M$ knots, which leads to a B-spline
basis $\boldsymbol{\pi}\left(n\right)\in R^{M+\kappa},$ $\boldsymbol{\pi}\left(n\right)=\left(\pi_{1}\left(n\right),\ldots,\:\pi_{T+\kappa}\left(n\right)\right)^{T},$
where $\kappa$ is the maximum degree of the polynomial. Then, we
define the learning curve as 
\begin{equation}
f(n)=\boldsymbol{\pi}\left(n\right)^{T}\boldsymbol{\theta},\label{eq:spline definition}
\end{equation}
 where $\boldsymbol{\theta}\in R^{M+\kappa},$ a vector of regression
parameters, is estimated by solving
\begin{equation}
\boldsymbol{\hat{\theta}}=\argmin_{\boldsymbol{\theta}}\sum_{j=1}^{J}\mid\hat{\rho}_{n_{j}}-\boldsymbol{\pi}\left(n_{j}\right)^{T}\boldsymbol{\theta}\mid,\quad\mbox{subject to}\label{eq: finding optimal spline}
\end{equation}
\[
\boldsymbol{\pi}(n_{j})^{T}\boldsymbol{\theta}\leq\boldsymbol{\pi}(n_{j+1})^{T}\boldsymbol{\theta}\quad\mbox{and}\quad\frac{\boldsymbol{\pi}(n_{j+1})^{T}\boldsymbol{\theta}-\pi(n_{j})^{T}\boldsymbol{\theta}}{\left(n_{j+1}-n_{j}\right)}\leq\frac{\boldsymbol{\pi}(n_{j})^{T}\boldsymbol{\theta}-\boldsymbol{\pi}(n_{j-1})^{T}\boldsymbol{\theta}}{\left(n_{j}-n_{j-1}\right)}.
\]
Again, the constraints should be decreasing and convex for decreasing performance metrics.

\subsection{Learning-curve-based performance estimates}

Once the learning curve is fitted, we use it to obtain a point estimate
and a confidence bound of the target performance $\rho_{D}.$ 

\subsubsection{Point estimate}
For the point estimate, we extrapolate the learning curve to the full sample size
$N,$ i.e. we estimate $\rho_{D}$ by $f(N)$ (\autoref{fig:Learning-Curve-methodology},
black cross). Extrapolation to $N$ circumvents the pessimistic bias
of resampling techniques, induced by learning the model on a subset
of $D.$ Additionally, smoothing of the learning trajectory, combined with the constraints, 
may denoise the individual repeated hold-out estimates. 

\subsubsection{Lower confidence bound}
For the confidence bound, we employ the learning curve to find a good
training set size that takes the bias-variance trade-off into account. 

We first prove a $(1-\alpha)$ lower confidence bound for parameter $\rho_{D}.$
We start by posing the following assumption: 
\begin{equation}\label{meanprob}
\frac{1}{V}\sum_{k=1}^V P(L_{s_n^k} > \rho_D) \leq \frac{1}{V}\sum_{k=1}^V P(L_{s_n^k} > \rho_{s_n^k})
\end{equation}
Note that this assumption aligns well with the monotony assumption  \eqref{eq:monotony}: for $\rho_D$ being larger than $\rho_{s_n^k}$ on average it is reasonable to assume that the average exceedance probability for threshold $\rho_D$ is smaller than that for $\rho_{s_n^k}$. 
\begin{thm}
\label{thm:proof of conf bound} Let $L_{s_{n}^{k}}(\alpha)$ be the $(1-\alpha)$ lower confidence bound for $\rho_{s_{n}^{k}},$ and let $S_n^k$ be a {\it random} subset of size $n$, with corresponding random lower bound $L_{S_{n}^{k}}(\alpha)$.  
Assume \eqref{meanprob} and suppose there exists a monotone transformation
$\xi$ of $L_{S_{n}^{k}}$ such that $(\xi(L_{S_{n}^{k}}))_{k=1}^K$
follows a multivariate normal distribution with standard normal marginals, then
\[
\forall n\leq N:\,P\left(L_{n}(\alpha)\leq\rho_{D}\right)\geq 1-\alpha,
\]
where $L_{n}(\alpha) = \text{med}\bigl(L_{S_{n}^{k}}(\alpha)\bigr)_{k=1}^K$.
\end{thm}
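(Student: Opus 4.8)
\quad The plan is to show that the two standing hypotheses collapse the claim into a single statement about the median of a Gaussian vector, and then to prove that statement. First I would introduce the random subset $S_{n}^{k}$ (uniform over the $V$ admissible subsets of size $n$) and condition on its realization: by the law of total probability,
\[
P\bigl(L_{S_{n}^{k}}(\alpha)>\rho_{D}\bigr)=\frac{1}{V}\sum_{j=1}^{V}P\bigl(L_{s_{n}^{j}}(\alpha)>\rho_{D}\bigr)\;\le\;\frac{1}{V}\sum_{j=1}^{V}P\bigl(L_{s_{n}^{j}}(\alpha)>\rho_{s_{n}^{j}}\bigr)\;\le\;\alpha,
\]
where the first inequality is exactly assumption \eqref{meanprob} and the second holds because each $L_{s_{n}^{j}}(\alpha)$ is, by hypothesis, a valid $(1-\alpha)$ lower bound for $\rho_{s_{n}^{j}}$. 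Now let $\xi$ be the postulated monotone transformation; since the negative of a centered Gaussian vector is again Gaussian with the same (standard normal) marginals, I may assume $\xi$ is increasing. Set $Z_{k}:=\xi(L_{S_{n}^{k}})$ and $c:=\xi(\rho_{D})$. A monotone map commutes with the median, so $\{L_{n}(\alpha)>\rho_{D}\}=\{\text{med}(Z_{k})_{k=1}^{K}>c\}$, while the display above reads $P(Z_{k}>c)\le\alpha$ for every $k$; since $Z_{k}\sim N(0,1)$ this forces $1-\Phi(c)\le\alpha$, hence $c\ge 0$ (using $\alpha\le 1/2$). Thus the theorem reduces to the purely probabilistic claim: if $Z=(Z_{1},\dots,Z_{K})$ is multivariate normal with standard normal marginals and $t\ge 0$, then $P\bigl(\text{med}(Z_{k})_{k=1}^{K}>t\bigr)\le 1-\Phi(t)$; applying this at $t=c$ gives $P(L_{n}(\alpha)>\rho_{D})\le 1-\Phi(c)=P(Z_{1}>c)\le\alpha$, i.e.\ the asserted coverage.

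The core of the argument is this Gaussian median tail bound. The event $\{\text{med}(Z)>t\}$ says that a strict majority of the coordinates exceed the non-negative level $t$, whereas the target $1-\Phi(t)$ is precisely the probability that one coordinate exceeds $t$; so the natural route is a Gaussian comparison, showing that among all admissible correlation matrices the majority-exceedance probability is maximized by the rank-one all-ones matrix, for which $Z_{1}=\cdots=Z_{K}$ almost surely and the probability equals $1-\Phi(t)$ exactly. Since $\{\text{med}(Z)>t\}$ is an increasing (upward-closed) event, one attack is to push each joint-exceedance event $\{Z_{k}>t\ \text{for all }k\in A\}$ with $|A|=\lceil(K+1)/2\rceil$ towards larger correlations via Slepian's inequality while controlling how these events overlap; an alternative is to dispose of the independent case directly, where $P(\text{med}(Z)>t)=P\bigl(\mathrm{Bin}(K,\Phi(-t))\ge\lceil(K+1)/2\rceil\bigr)$ is a symmetric-Beta tail and is $\le\Phi(-t)$ by a one-line convexity argument on $[0,\tfrac{1}{2}]$, and then to interpolate. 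The case of even $K$ needs a little extra care because the median is then the average of the two central order statistics rather than a single one, but the same comparison principle applies.

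I expect the comparison step just described to be the main obstacle: unlike the textbook Slepian/Gordon settings, the relevant event is a union of orthant-type events rather than a single orthant, so monotonicity of its probability in the correlations is not immediate and may require an exchangeability reduction (symmetrizing over coordinate permutations) or an induction on $K$. Everything else is routine: the reduction above, the role of \eqref{meanprob}, the transformation $\xi$, and the fact that the median commutes with monotone maps. For a decreasing performance metric the construction is simply mirrored --- $\xi$ sends the upper bound to $N(0,1)$, one gets $c\le 0$, and one instead bounds $P(\text{med}(Z)<t)$ for $t\le 0$, which is the same statement by the symmetry $Z\stackrel{d}{=}-Z$.
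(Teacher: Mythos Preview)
Your reduction is exactly the paper's proof: condition on the random subset to turn assumption \eqref{meanprob} into $P(L_{S_n^k}(\alpha)\le\rho_D)\ge 1-\alpha$, push the monotone $\xi$ through the median, deduce $\xi(\rho_D)\ge\Phi^{-1}(1-\alpha)$, and then appeal to a tail inequality for the median of a Gaussian vector with standard normal marginals. The only difference is that the paper does not attempt to prove that last inequality; it simply invokes Theorem~1 of \cite{van2009testing}, which states precisely $P\bigl(\mathrm{med}(Z_k)_{k=1}^K\le x\bigr)\ge\Phi(x)$ for such $Z$. So the ``main obstacle'' you isolate is already available as a citable lemma, and with it your plan is complete. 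Your remark that the inequality can only hold for $t\ge 0$ (equivalently $\alpha\le 1/2$) is correct and is a restriction the paper leaves implicit; your Slepian/exchangeability sketch is a plausible route to a self-contained proof of the lemma, but it is not needed here.
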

\begin{proof}
First note that from \eqref{meanprob} we have
\begin{equation}\label{Lsnk}
\begin{split}
P(L_{S_n^k} \leq \rho_D) &= \sum_{k=1}^V P(S_n^k=s_n^k) P(L_{S_n^k} \leq \rho_D | S_n^k=s_n^k)
\\ &= \frac{1}{V}\sum_{k=1}^V P(L_{s_n^k} \leq \rho_D) \geq \frac{1}{V}\sum_{k=1}^V P(L_{s_n^k} \leq \rho_{s_n^k}) = 1-\alpha,
\end{split}
\end{equation}
as $L_{s_n^k}$ is a $(1-\alpha)$ lower confidence bound for $\rho_{s_{n}^{k}}$. Next, we have
\begin{equation}\label{Ln}
\begin{split}
P\left(L_{n}(\alpha)\leq \rho_D \right) &= P\biggl(\text{med}\bigl(L_{S_{n}^{k}}(\alpha)\bigr)_{k=1}^K   \leq \rho_D \biggr)\\ &=
P\biggl(\xi\Bigl(\text{med}\bigl(L_{S_{n}^{k}}(\alpha)\bigr)_{k=1}^K\Bigr)  \leq \xi(\rho_D)  \biggr)\\ &=
P\biggl(\text{med}\Bigl(\xi\bigl(L_{S_{n}^{k}}(\alpha)\bigr)\Bigr)_{k=1}^K  \leq  \xi(\rho_D)  \biggr).
\end{split}
\end{equation}
Then, \eqref{Lsnk} and $\xi$ being monotone imply $P\bigl(\xi(L_{S_{n}^{k}}(\alpha)) \leq \xi(\rho_D)\bigr) \geq 1-\alpha$. Hence $\xi(\rho_D) \geq \Phi^{-1}(1-\alpha),$ with $\Phi$ the univariate standard normal c.d.f., as $\xi\left(L_{S_{n}^{k}}(\alpha)\right) \sim N(0,1).$ Next, we follow Th.1 in \citep{van2009testing}, which states that for multivariate normal and marginally standard normal random variables $Z_k$: $P\left(\text{med}(Z_k)_{k=1}^K \leq x\right) \geq \Phi(x)$.
Finally, substitute $x= \xi(\rho_D)$:
$$P\biggl(\text{med}\Bigl(\xi\bigl(L_{S_{n}^{k}}(\alpha)\bigr)\Bigr)_{k=1}^K  \leq  \xi(\rho_D)\biggr) \geq \Phi\bigl(\xi(\rho_D)\bigr) \geq \Phi\left(\Phi^{-1}(1-\alpha)\right) = 1-\alpha,$$
which, in combination with \eqref{Ln}, completes the proof.
\end{proof}
Note that the strengths of \hyperref[thm:proof of conf bound]{Theorem 1} are that 1) it only requires the existence of $\xi$; we do not need to know what it actually is;
and 2) for any continuous random variable $Z$ with finite mean there always exist a monotone $\xi$ such that {\it marginally} $\xi(Z) \sim N(0,1)$. Because a multivariate normal distribution can always be decomposed as the sum of the marginals and the copula \citep{sklar1959fonctions}, the multivariate assumption in \hyperref[thm:proof of conf bound]{Theorem 1} is effectively only an assumption on the dependencies between copies of $\xi(L_{S_n^k})$. 

\hyperref[thm:proof of conf bound]{Theorem 1} also holds for decreasing performance metrics and an upper confidence bound.

\paragraph{Determining a good subsample size}\mbox{}\\
\hyperref[thm:proof of conf bound]{Theorem 1} implies that we can
determine a lower confidence bound $L_{n}(\alpha)$ for
$\rho_{D}$ at any subsample size $n\leq N.$ To determine an optimal $n,$ we explicitly take the bias-variance trade-off into account 
by employing the saturation level of the learning curve. 
When a learner has converged at subsample size $n \ll N,$ relatively many samples can
be used for testing because this will hardly impact the (empirical) bias, i.e. $\widehat{\mbox{bias}}(n)=f(N)-f(n).$
The learning curve enables us to find a $n,$ which we call $n_{opt},$ such that 
the bias is not too large, and the test set is not too small, which would lead to high variance.
To determine $n_{opt}$, we minimize the mean square error, i.e. 
\begin{equation}
n_{opt}=\argmin_{n} \left\{\widehat{\mbox{bias}}^{2}(n)+\widehat{\mbox{Var}}(n)\right\},\label{eq: nopt mse}
\end{equation}
with $\widehat{\mbox{bias}}(n)$ defined above. For $\widehat{\mbox{Var}}(n)$, we 
employ an asymptotic variance estimator for the given performance metric. For the AUC, 
we employ a formula derived by \cite{bamber1975area}, and for the PMSE, \cite{FABER199979}
derived an estimator. These estimators are found in \hyperref[sec:Supplements]{supplementary Section 2}.

If no asymptotic variance estimator is available, we suggest to control for the empirical bias. This method is described in \hyperref[sec:Supplements]{supplementary Section 3}. 
\begin{figure}[H]
\begin{center}
\includegraphics[bb=13bp 20bp 960bp 540bp,clip,width=6.5in]{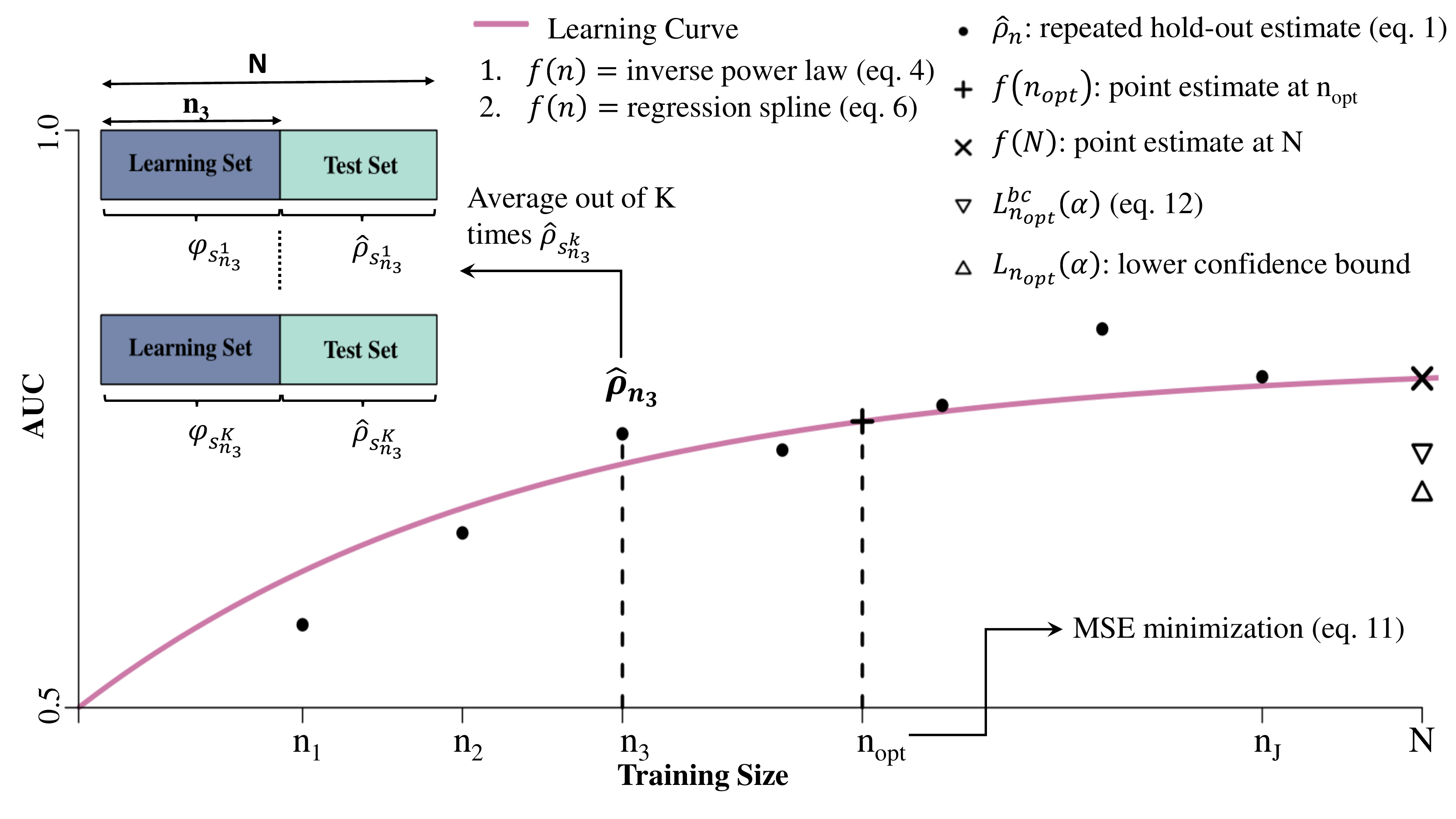}
\end{center}
\caption{\footnotesize{Summary of Learn2Evaluate.  First, we estimate the $AUC$ for
several training sample sizes $n_{j}$ by repeated hold-out estimation, rendering a learning
trajectory (black dots). We smooth this trajectory by an inverse
power law or a constrained regression spline, leading to a 
relationship $f(n)$ between the training set size $n$ and the AUC.
We then estimate the predictive performance $\rho_{D}$ by $f(N)$ (black cross). 
The learning curve also allows us to determine a lower confidence bound (upward
triangle) at an optimal training set size $n_{opt}$ (dotted vertical
line), which we determine by MSE minimization.
To obtain a tighter confidence bound, a bias correction is performed (downward triangle).}}
\label{fig:Learning-Curve-methodology}
\end{figure}
\subsubsection{Bias-corrected confidence bound}
Our predictive performance point estimate for $\rho_{D}$ is
$f(N)$ (\autoref{fig:Learning-Curve-methodology},
black cross) and our confidence bound for $\rho_{D}$ equals $L_{n_{opt}}(\alpha)$
(\autoref{fig:Learning-Curve-methodology}, upward triangle), with
$n_{opt}$ determined by \eqref{eq: nopt mse}.
This confidence bound may be tightened by performing a bias correction.
$L_{n_{opt}}$ is in principle a confidence bound for $\rho_{n_{opt}},$ although it is also valid for $\rho_{D}.$ Hence, we may
perform an empirical bias correction to $L_{n_{opt}}$ 
(\autoref{fig:Learning-Curve-methodology}, downward triangle), which leads to
\begin{equation}
L^{bc}_{n_{opt}}(\alpha)=L_{n_{opt}}(\alpha)+\left(f(N)-f(n_{opt})\right).\label{eq: bias corrected conf bound}
\end{equation}
Correcting for the empirical bias depends on the validity of our point estimate $f(N),$ which we assess in a simulation study.
\section{Simulation study}
\label{sec:Simulation}

To verify Learn2Evaluate, we evaluate the quality of the point estimate
$f(N),$ and the theoretically justified confidence bound $L_{n_{opt}}(\alpha)$ and
its bias-corrected version $L^{bc}_{n_{opt}}(\alpha)$ in a high-dimensional simulation study. We 
focus on classification and the AUC as performance metric.

\subsection{Description of experiments}

The binary response  is generated via
\begin{equation}
Y_{i}\sim \mbox{Bern}\left(\frac{\exp\left(\boldsymbol{X}_{i}\boldsymbol{\beta}\right)}{1+\exp\left(\boldsymbol{X}_{i}\boldsymbol{\beta}\right)}\right) \qquad i=1,\ldots,N, \label{eq: Simulated Response}
\end{equation}
with covariates $\boldsymbol{X}_{i}\sim \mathcal{N}\left(0_{p},\varSigma_{p\times p}\right),$ where $p=2000$, and elements of the parameter vector $\beta_{1}\ldots\beta_{p}\sim \mbox{Expo}(\nu).$  
To mimic a realistic correlation structure between the features,
we define $\varSigma_{p\times p}$ as the estimated variance-covariance matrix  from a real high-dimensional
omics data set described in \cite{best2015rna}, from which we randomly
select $2000$ genes as covariates. This matrix is estimated
by a shrinkage method \citep{SchaferStrimmer2005} implemented in the R package corpcor \citep{schafer2017package}.
Parameter vector $\boldsymbol{\beta}$  is defined such that a dense covariate-response structure is present, which is often the case in 
omics applications \citep{BOYLE2017}. We vary the sample size $N$ ($N=100$ and
$N=200$) and the signal in the data by tuning rate parameter $\nu$ ($\nu=1000$ and
$\nu=100$). 

For each setting, specified by $N$ and $\nu$, we simulate $R=1000$
data sets. We then apply Learn2Evaluate to each data set for three learners:
ridge regression, lasso regression, both implemented in the R package  glmnet \citep{glmnet},  and random forest, for which we employ the R package randomforestSRC  \citep{randomforestSRC}.
We generate learning curves by first estimating the AUC by repeated hold-out estimation
($50$ repeats) for ten subsample sizes $n_{j}$ homogeneously spread
over the sample size range $\left[20,N-10\right]$. For ridge and
lasso regression, we estimate the penalty parameter for each $n_{j}$ 
by the median of repeated ($5$ times) $10$-fold CV. This median is fixed 
for the $50$ repeats of each $n_{j}$. We obtain the learning
curve by smoothing the AUC point estimates at different subsample
sizes by either the inverse power law or the constrained regression spline.
We then estimate the AUC by $f(N)$ and 95\% lower confidence
bounds by $L_{n_{opt}}(0.05)$ and its bias-corrected version $L^{bc}_{n_{opt}}(0.05).$ We determine $n_{opt}$
by MSE minimization given by \eqref{eq: nopt mse}, using the asymptotic AUC variance estimator (eq. 5 of \hyperref[sec:Supplements]{supplementary Section 2}). 

We denote $f(N)$ by $f_{r}(N)$ for the $r$th simulated data set, and we approximate the true AUC by $\mbox{AUC}_{r}$: 
the AUC of the $r$th learned model evaluated on $25,000$ independent test samples.
The average $\mbox{AUC}_{r}$ (of $R=1000$ simulation runs) is given in the Appendix (\autoref{tab:True-AUC-results}). 

We assess the quality of point estimates $f_{r}(N)$ by
the root mean square error (RMSE) and the averaged bias (Bias),
\begin{equation}
\mbox{RMSE}=\sqrt{\frac{1}{R}\sum_{r=1}^{R}\left(f_{r}(N)-\mbox{AUC}_{r}\right)^{2}}\quad\mbox{and}\quad\mbox{Bias}=\frac{1}{R}\sum_{r=1}^{R}\left(f_{r}(N)-\mbox{AUC}_{r}\right).\label{eq: RMSE simulation}
\end{equation}
Confidence bounds $L_{n_{opt}}(0.05)$ and $L^{bc}_{n_{opt}}(0.05)$
are evaluated by their coverage, i.e. the proportion of times the confidence bound
is lower than $\mbox{AUC}_{r},$ and the distance of each confidence bound to its corresponding $\mbox{AUC}_{r}.$

We compare Learn2Evaluate, fitted by an inverse power law (L2E-P) or a spline (L2E-S), with two methods that produce a point estimate
and a lower confidence bound. Firstly, we consider ten-fold cross-validation (10F-CV). Point estimates are given by
the average of the ten test folds and a lower confidence bound is obtained by employing an asymptotic
variance estimator, which is derived by \cite{LeDell2015ComputationallyEC} and implemented 
in the R package cvAUC \citep{ledell2014package}.
Secondly, we consider leave-one-out bootstrapping (LOOB) \citep{efron1994introduction}
with $500$ bootstrapped training sets and complementary
test sets. From the $500$ AUC estimates, we obtain a point estimate 
and a lower confidence bound by taking the average and the 5\% quantile, respectively. 
Details on the code are found in \hyperref[sec:Supplements]{supplementary Section 9}.

\subsection{Point estimate evaluation}

The RMSE and the averaged bias of the point estimates are given in \autoref{tab:point estimates}.
\begin{table}[!t]
\centering{}\caption{\footnotesize{Root mean square error (RMSE)
and averaged bias (Bias) of Learn2Evaluate, fitted by an inverse power law (L2E-P) or a constrained
regression spline (L2E-S), 10-fold CV (10F-CV), and leave-one-out bootstrapping
(LOOB).}}
\begin{tabular}{cc>{\centering}p{1.1cm}>{\centering}p{1.1cm}>{\centering}p{1.1cm}>{\centering}p{1.1cm}>{\centering}p{1.1cm}>{\centering}p{1.1cm}}
\hline 
\multicolumn{2}{c}{} & \multicolumn{2}{c}{\textbf{\footnotesize{}Ridge}} & \multicolumn{2}{c}{\textbf{\footnotesize{}Lasso}} & \multicolumn{2}{c}{\textbf{\footnotesize{}RF}}\tabularnewline
\hline 
\multicolumn{2}{c}{} & {\footnotesize{}RMSE} & {\footnotesize{}Bias} & {\footnotesize{}RMSE} & {\footnotesize{}Bias} & {\footnotesize{}RMSE} & {\footnotesize{}Bias}\tabularnewline
\cline{3-8} \cline{4-8} \cline{5-8} \cline{6-8} \cline{7-8} \cline{8-8} 
 & {\footnotesize{}L2E-P} & \textbf{\footnotesize{}0.053} & {\footnotesize{}0.004} & {\footnotesize{}0.058} & \textbf{\footnotesize{}0.009} & \textbf{\footnotesize{}0.048} & {\footnotesize{}0.020}\tabularnewline
{\footnotesize{}$N=100,$} & {\footnotesize{}L2E-S} & {\footnotesize{}0.054} & {\footnotesize{}0.007} & {\footnotesize{}0.070} & {\footnotesize{}0.025} & {\footnotesize{}0.050} & {\footnotesize{}0.026}\tabularnewline
{\footnotesize{}$\nu=1000$} & {\footnotesize{}10F-CV} & {\footnotesize{}0.056} & \textbf{\footnotesize{}-0.001} & {\footnotesize{}0.084} & \textbf{\footnotesize{}-0.009} & {\footnotesize{}0.051} & \textbf{\footnotesize{}0.017}\tabularnewline
 & {\footnotesize{}LOOB} & {\footnotesize{}0.071} & {\footnotesize{}-0.039} & \textbf{\footnotesize{}0.057} & {\footnotesize{}-0.020} & {\footnotesize{}0.050} & {\footnotesize{}-0.013}\tabularnewline
\hline 
 & {\footnotesize{}L2E-P} & \textbf{\footnotesize{}0.037} & {\footnotesize{}0.002} & \textbf{\footnotesize{}0.048} & \textbf{\footnotesize{}0.002} & \textbf{\footnotesize{}0.037} & {\footnotesize{}0.002}\tabularnewline
{\footnotesize{}$N=100,$} & {\footnotesize{}L2E-S} & {\footnotesize{}0.038} & {\footnotesize{}0.004} & {\footnotesize{}0.056} & {\footnotesize{}0.009} & {\footnotesize{}0.038} & {\footnotesize{}0.005}\tabularnewline
{\footnotesize{}$\nu=100$} & {\footnotesize{}10F-CV} & {\footnotesize{}0.039} & \textbf{\footnotesize{}0} & {\footnotesize{}0.062} & {\footnotesize{}-0.005} & {\footnotesize{}0.039} & \textbf{\footnotesize{}0.001}\tabularnewline
 & {\footnotesize{}LOOB} & {\footnotesize{}0.048} & {\footnotesize{}-0.028} & {\footnotesize{}0.051} & {\footnotesize{}-0.025} & {\footnotesize{}0.043} & {\footnotesize{}-0.018}\tabularnewline
\hline 
 & {\footnotesize{}L2E-P} & \textbf{\footnotesize{}0.036} & \textbf{\footnotesize{}0.001} & \textbf{\footnotesize{}0.041} & \textbf{\footnotesize{}0.004} & \textbf{{\footnotesize{}0.034}} & \textbf{\footnotesize{}0.002}\tabularnewline
{\footnotesize{}$N=200,$} & {\footnotesize{}L2E-S} & {\footnotesize{}0.038} & {\footnotesize{}0.004} & {\footnotesize{}0.046} & {\footnotesize{}0.010} & \textbf{{\footnotesize{}0.034}} & {\footnotesize{}0.004}\tabularnewline
{\footnotesize{}$\nu=1000$} & {\footnotesize{}10F-CV} & {\footnotesize{}0.039} & {\footnotesize{}-0.002} & {\footnotesize{}0.050} & \textbf{\footnotesize{}-0.004} & {\footnotesize{}0.035} & {\footnotesize{}0.009}\tabularnewline
 & {\footnotesize{}LOOB} & {\footnotesize{}0.060} & {\footnotesize{}-0.041} & {\footnotesize{}0.052} & {\footnotesize{}-0.034} & {\footnotesize{}0.036} & {\footnotesize{}-0.009}\tabularnewline
\hline 
 & {\footnotesize{}L2E-P} & \textbf{\footnotesize{}0.024} & \textbf{\footnotesize{}0} & \textbf{\footnotesize{}0.031} & \textbf{\footnotesize{}0.001} & \textbf{{\footnotesize{}0.026}} & {\footnotesize{}0.007}\tabularnewline
{\footnotesize{}$N=200,$} & {\footnotesize{}L2E-S} & {\footnotesize{}0.025} & {\footnotesize{}0.002} & {\footnotesize{}0.035} & {\footnotesize{}0.004} & {\footnotesize{}0.027} & {\footnotesize{}0.008}\tabularnewline
{\footnotesize{}$\nu=100$} & {\footnotesize{}10F-CV} & {\footnotesize{}0.025} & {\footnotesize{}-0.001} & {\footnotesize{}0.036} & {\footnotesize{}-0.005} & \textbf{{\footnotesize{}0.026}} & {\footnotesize{}0.008}\tabularnewline
 & {\footnotesize{}LOOB} & {\footnotesize{}0.035} & {\footnotesize{}-0.022} & {\footnotesize{}0.037} & {\footnotesize{}-0.022} & \textbf{{\footnotesize{}0.026}} & \textbf{\footnotesize{}-0.006}\tabularnewline
\hline 
\end{tabular}
\label{tab:point estimates}
\end{table}
The inverse power law (L2E-P) is the overall winner in terms of RMSE, although sometimes,
in particular for the random forest (RF), differences are small. In one
case (lasso, $N=100,$ $\nu=1000$), leave-one-out bootstrapping (LOOB) has a slightly
better RMSE than L2E-P. \autoref{tab:point estimates}
also suggest that the inverse power law (parametric) has
a smaller RMSE than the constrained regression spline (nonparametric). 

For the averaged bias, the situation is less straightforward. Learn2Evaluate
always has a positive bias, whereas the competing methods mostly have a negative bias. 
The bootstrap often has the most biased estimate. 
Only for the lasso and the random forest in the $N=100,$ $\nu=1000$ setting and the random forest
in the $N=200,$ $\nu=100$ setting is this not the case. L2E-P, which is always (somewhat) less
biased than L2E-S, is
competitive to 10-fold CV (10F-CV).

A comparison between the learners reveals that the lasso has the
largest RMSE for all simulation settings and predictive
performance estimators, except for the bootstrap in settings $N=100,$ $\nu=1000$
and $N=200,$ $\nu=1000.$ This finding may be explained by
the slope of the learning curve at the end of the learning trajectory.
A large slope causes a drop in predictive performance when it is estimated
on a subset of the data set, as is the case for 10-fold CV and bootstrapping.
Learn2Evaluate also suffers from a larger slope because
extrapolation to $N$ becomes more sensitive to errors. 

\subsection{Confidence bound evaluation}

Coverage probabilities for confidence bounds $L_{n_{opt}}(0.05)$ and $L^{bc}_{n_{opt}}(0.05)$ are given in \autoref{tab:Coverage-results}, 
with $n_{opt}$ determined by MSE minimization. 
We also evaluate the asymptotic 10-fold CV confidence bound estimator for the AUC (Le Dell) and leave-one-out bootstrapping (LOOB). 
We show results for the learning curve fitted by an inverse power law. 
In \hyperref[sec:Supplements]{supplementary Section 3} (Table S1), we show coverage results when $n_{opt}$ is determined by controlling for the empirical bias instead of MSE minimization. 
These coverage results are similar as in \autoref{tab:Coverage-results}.
Results for the constrained regression spline are found in \hyperref[sec:Supplements]{supplementary Section 4} (Table S2).

\begin{table}[h!]
\caption{\footnotesize{Coverage results for the 95\% lower confidence bounds
of Learn2Evaluate with ($L_{n_{opt}}^{bc}$) and without bias correction ($L_{n_{opt}}$), the asymptotic AUC variance
estimator of Le Dell (Le Dell), and leave-one-out bootstrapping (LOOB). The learning curve is fitted by a power law and $n_{opt}$ is determined by MSE minimization.}}
\begin{tabular}{cccccccccc}
\hline 
\multicolumn{2}{c}{} & \textbf{\footnotesize{}Ridge} & \textbf{\footnotesize{}Lasso} & \textbf{\footnotesize{}RF} &  &  & \textbf{\footnotesize{}Ridge} & \textbf{\footnotesize{}Lasso} & \textbf{\footnotesize{}RF}\tabularnewline
\hline 
 & {\footnotesize{}$L_{n_{opt}}$} & {\footnotesize{}0.976} & {\footnotesize{}0.992} & {\footnotesize{}0.980} &  & {\footnotesize{}$L_{n_{opt}}$} & {\footnotesize{}0.974} & {\footnotesize{}0.991} & {\footnotesize{}0.975}\tabularnewline
{\footnotesize{}$\boldsymbol{N=100,}$ } & {\footnotesize{}$L_{n_{opt}}^{bc}$} & {\footnotesize{}0.949} & {\footnotesize{}0.935} & {\footnotesize{}0.937} & {\footnotesize{}$\boldsymbol{N=100,}$ } & {\footnotesize{}$L_{n_{opt}}^{bc}$} & {\footnotesize{}0.941} & {\footnotesize{}0.977} & {\footnotesize{}0.953}\tabularnewline
{\footnotesize{}$\boldsymbol{\nu=1000}$ } & {\footnotesize{}Le Dell} & {\footnotesize{}0.895} & {\footnotesize{}0.851} & {\footnotesize{}0.876} & {\footnotesize{}$\boldsymbol{\nu=100}$ } & {\footnotesize{}Le Dell} & {\footnotesize{}0.870} & {\footnotesize{}0.847} & {\footnotesize{}0.892}\tabularnewline
 & {\footnotesize{}LOOB} & {\footnotesize{}0.990} & {\footnotesize{}0.996} & {\footnotesize{}0.990} &  & {\footnotesize{}LOOB} & {\footnotesize{}0.993} & {\footnotesize{}0.995} & {\footnotesize{}0.995}\tabularnewline
\hline 
 & {\footnotesize{}$L_{n_{opt}}$} & {\footnotesize{}0.992} & {\footnotesize{}0.995} & {\footnotesize{}0.991} &  & {\footnotesize{}$L_{n_{opt}}$} & {\footnotesize{}0.981} & {\footnotesize{}0.996} & {\footnotesize{}0.969}\tabularnewline
{\footnotesize{}$\boldsymbol{N=200,}$ } & {\footnotesize{}$L_{n_{opt}}^{bc}$} & {\footnotesize{}0.965} & {\footnotesize{}0.969} & {\footnotesize{}0.970} & {\footnotesize{}$\boldsymbol{N=200,}$ } & {\footnotesize{}$L_{n_{opt}}^{bc}$} & {\footnotesize{}0.960} & {\footnotesize{}0.974} & {\footnotesize{}0.934}\tabularnewline
{\footnotesize{}$\boldsymbol{\nu=1000}$ } & {\footnotesize{}Le Dell} & {\footnotesize{}0.911} & {\footnotesize{}0.874} & {\footnotesize{}0.884} & {\footnotesize{}$\boldsymbol{\nu=100}$ } & {\footnotesize{}Le Dell} & {\footnotesize{}0.908} & {\footnotesize{}0.886} & {\footnotesize{}0.869}\tabularnewline
 & {\footnotesize{}LOOB} & {\footnotesize{}0.996} & {\footnotesize{}1,00} & {\footnotesize{}0.992} &  & {\footnotesize{}LOOB} & {\footnotesize{}0.996} & {\footnotesize{}0.998} & {\footnotesize{}0.989}\tabularnewline
\hline 
\end{tabular}
\label{tab:Coverage-results} 
\end{table}

\autoref{tab:Coverage-results} demonstrates that $L_{n_{opt}}(0.05)$ controls the type-\mbox{I}
error well below the nominal level, while being less conservative
than leave-one-out bootstrapping. Coverage is closer to nominal
level for $L^{bc}_{n_{opt}}(0.05)$.
In one instance (random forest, $N=200,$ $\nu=100$, $L^{bc}_{n_{opt}}$),
the coverage drops just below 95\% when taking the binomial error
of the coverage estimate into account. When the learning curve is fitted by a constrained regression
spline, $L^{bc}_{n_{opt}}(0.05)$ is too liberal in three
cases (\hyperref[sec:Supplements]{supplementary Table S2}). This result is explained by the worse point estimates of the spline 
compared to the power law.  The asymptotic confidence bound estimator (Le Dell) is too liberal for small sample sizes, similar as
in \cite{LeDell2015ComputationallyEC}.

In \hyperref[sec:Supplements]{supplementary Section 5} (Figures S2, S3, S4, and S5), we depict
boxplots of the distance of the lower confidence bound to the corresponding target parameter $\mbox{AUC}_{r}$ 
for the bootstrap, and Learn2Evaluate with ($L^{bc}_{n_{opt}}$) and without ($L_{n_{opt}}$) bias correction.
These boxplots show that $L_{n_{opt}}$ and $L^{bc}_{n_{opt}}$ are on average closer to the target parameter $\mbox{AUC}_{r}$ than bootstrapping, 
with an exception of $L_{n_{opt}}$ for the lasso. 
The bias correction shortens the distance compared to the confidence bound without bias correction.

The coverage results and the boxplots also illustrate that $L_{n_{opt}}$
for lasso regression is further away from $\mbox{AUC}_{r}$ than for 
ridge regression and random forest. The average optimal training set size $n_{opt}$ for lasso regression
is larger than for ridge regression and random forest (Appendix \autoref{tab:nopt power law}). 
Hence, the average test set size for lasso regression is smaller, which leads to wider confidence bounds.
For $L^{bc}_{n_{opt}}$, the difference between the learners is smaller.

\section{Applications}
\label{sec:Applications}

\subsection{Classification}
\label{sec:Classification}

We apply Learn2Evaluate to messenger-RNA sequencing data, 
extracted from blood platelets, as described in \cite{best2015rna}.
These experiments demonstrated that RNA profiles of blood platelets 
are a promising diagnostic tool for early cancer detection.

RNA profiles of blood platelets from $230$ patients, having one of
the in total six tumor types, and $55$ healthy controls were obtained.
The raw data are online available in the GEO database (\href{https://www.ncbi.nlm.nih.gov/geo/query/acc.cgi?acc=GSE68086}{GEO: GSE68086}).
Data preprocessing is described in \cite{novianti2017better}. Processed data are available via
\url{https://github.com/JeroenGoedhart/Learn2Evaluate} as well as the R code 
to apply Learn2Evaluate to these data.  For additional details on the data 
and the code, we refer to \hyperref[sec:Supplements]{supplementary Section 9}.

Here, we consider three binary classification cases: non-small-cell
lung cancer $(\mbox{NSCLC,}\quad n=60)$ versus the control
group $(n=55)$ using $p=19,300$ transcripts (\autoref{fig:NSCLC vs HC}), breast cancer $(\mbox{Breast,}\quad n=40)$ versus
NSCLC using $p=18,741$ transcripts (\autoref{fig:Breast vs NSCLC}),
and breast cancer versus pancreas cancer $(\mbox{Pancreas,}\quad n=35)$ using
$p=18,350$ transcripts (\autoref{fig:Breast vs Pancreas}). For
three other binary classification experiments, we refer to \hyperref[sec:Supplements]{supplementary Section 6} 
(Figure S6).

Next, we generate learning curves for ridge regression (black), lasso
regression (blue), and random forest (green). Learning curves are
obtained by the same procedure as in the simulation study. For ten
homogeneously spaced subsample sizes, we estimate the predictive performance,
quantified by the AUC, by repeated ($50$ times) hold-out estimation.
This renders the learning trajectory (dots), i.e. \eqref{eq:learning trajectory}.
The maximum subsample size equals $N-10.$ We varied the number
of different subsamples and repeats to establish that these defaults
yield stable estimates (\hyperref[sec:Supplements]{supplementary Section 8}, Table S3). We then fit an inverse
power law given by \eqref{eq:inverse power law} (solid line) because this 
fit obtained the best performance estimates in the simulation.

We construct confidence bounds based on Learn2Evaluate without bias correction ($L_{n_{opt}}$, upward triangle), Learn2Evaluate with bias correction ($L^{bc}_{n_{opt}}$, downward triangle), and leave-one-out bootstrapping (Bootstrap, square).
We use MSE minimization, i.e. \eqref{eq: nopt mse}, to determine
an optimal training subsample size $n_{opt}$ (dashed vertical line). We also
display the point estimates of Learn2Evaluate (star).
\begin{figure}[t!]
    \centering
    
    \begin{subfigure}[t]{0.49\textwidth}
        \centering
        \includegraphics[bb=0bp 10bp 504bp 412bp,width=1\textwidth]{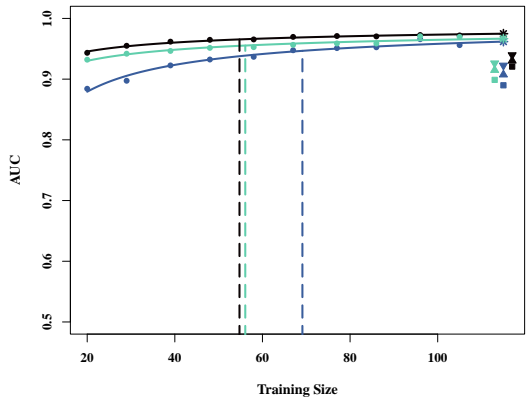}
        \caption{\footnotesize{Control vs. NSCLC}}
        \label{fig:NSCLC vs HC}
    \end{subfigure}
    \hfill
    \begin{subfigure}[t]{0.49\textwidth}
        \centering
	\includegraphics[bb=0bp 10bp 504bp 412bp,width=1\textwidth]{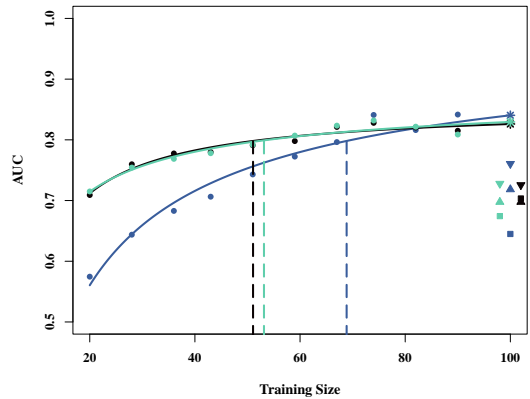}    
	\caption{\footnotesize{Breast vs. NSCLC}}
        \label{fig:Breast vs NSCLC}
    \end{subfigure}\\
     \begin{subfigure}[t]{0.49\textwidth}
        \centering
	\includegraphics[bb=0bp 10bp 504bp 412bp,width=1\textwidth]{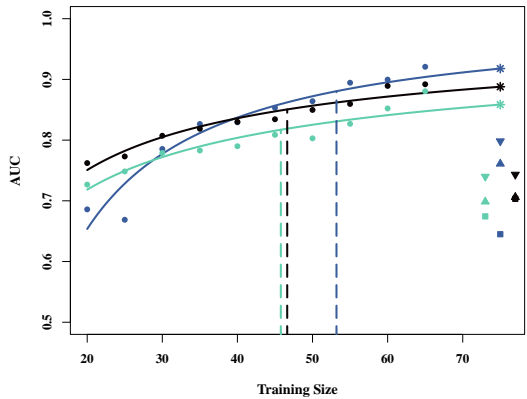}    
	\caption{\footnotesize{Breast vs. Pancreas}}
        \label{fig:Breast vs Pancreas}
     \end{subfigure}
     \hfill
     \begin{subfigure}[t]{0.49\textwidth}
        \centering
	\captionsetup[subfigure]{labelformat=empty}
	\includegraphics[bb=0bp 10bp 504bp 412bp,width=1\textwidth]{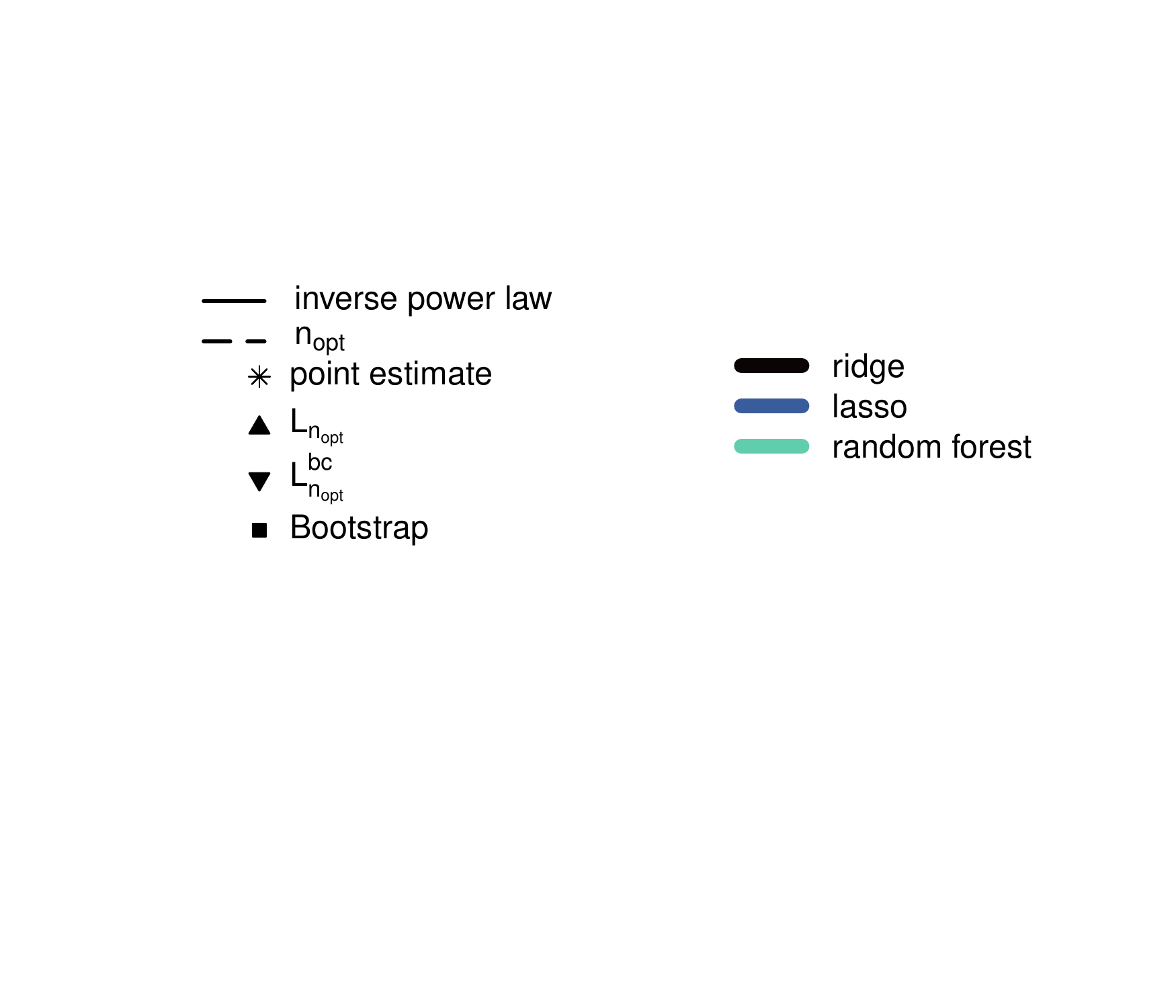}    
	\label{fig:legend}
    \end{subfigure}
     \caption{\footnotesize{Learning curves for three classification
	tasks. In each task, we produce a learning curve for a ridge (black),
	a lasso (blue), and a random forest (green) model. For each learner, we depict the learning trajectory (dots)
	together with the inverse power law fit (solid line).
	We show confidence bounds of Learn2Evaluate with  ($L^{bc}_{n_{opt}}$,
	downward triangle) and without ($L_{n_{opt}}$, upward triangle) bias correction, and bootstrapping (Bootstrap, square). We use MSE minimization to 
       	determine $n_{opt}$ (dashed vertical line).}}
    \label{fig:Learning-curves}
\end{figure}
\autoref{fig:Learning-curves} and \hyperref[sec:Supplements]{supplementary Figure S6} illustrate that 
the theoretically justified confidence bound of Learn2Evaluate ($L_{n_{opt}}$) is, in most cases, slightly
tighter than the bootstrap confidence bound, which agrees 
with the simulation results. A bias correction ($L^{bc}_{n_{opt}}$)
tightens the theoretically justified bound as expected. Furthermore, \autoref{fig:Learning-curves}
shows that $n_{opt}$ adapts to the saturation level of the
learning curve. A more saturated curve leads to a smaller training
set and hence to a larger test set to determine the lower bound.

\autoref{fig:Learning-curves} and \hyperref[sec:Supplements]{supplementary Figure S6}
also illustrate the additional benefits of using learning curves
for predictive performance estimation. Firstly, the curves indicate
how future samples increase the AUC. For lasso regression, which has
in most cases a large slope, additional samples are expected to increase
the predictive performance. The ridge and random forest level off completely in
\autoref{fig:NSCLC vs HC}, suggesting that future samples do not increase the AUC. 

Secondly, the curves provide a more complete comparison between learners
than just employing single point estimates. For example, point estimates
of the lasso and the ridge are almost similar in \autoref{fig:Breast vs NSCLC}.
The lasso model may still be preferred because its learning curve
has a larger slope, which likely leads to a larger increase in the AUC when additional training samples are available.

Finally, the curves show different learning rates. The lasso model
tends to be a slow learner. At small subsample sizes, it performs
much worse than the ridge and random forest model, but it catches
up at larger subsample sizes.

\subsection{Regression}
\label{sec:Regression}
The level of methylation of CpG islands  has been shown to be predictive of age \citep{DNAmeth}. 
Here, we apply Learn2Evaluate to a DNA methylation data set ($N=108$ and $p=2289$), which is 
used to accurately predict age. The predictive performance is quantified by the PMSE. Details on this data set
and the learning curves are found in \hyperref[sec:Supplements]{supplementary Section 7} (Figure S7).

Figure S7 illustrates that the decreasing and convex power law fits the learning trajectories for the PMSE well.
The upper confidence bounds of Learn2Evaluate are in most cases tighter than the bootstrapped confidence bounds. Only the theoretically justified 
confidence bound $L_{n_{opt}}$ of random forest is more conservative. The bias-corrected confidence 
bounds $L^{bc}_{n_{opt}}$ are always tighter than the bootstrapped confidence bounds. Again, the optimal 
training size $n_{opt}$ adapts to the saturation level of the curves. 

Lasso regression has a smaller PMSE than ridge regression and random forest in this application. 
Interestingly, the bias-corrected confidence bound ($L^{bc}_{n_{opt}}$) of lasso regression is lower valued
than the point estimates of ridge regression and random forest, which suggests a clear difference between the learners.

\section{Discussion}
\label{sec:Discussion}

We presented a novel method, called Learn2Evaluate, to estimate the predictive performance. 
Learn2Evaluate facilitates the computation of a point estimate and a lower confidence bound, 
which is proven to control type-\mbox{I} error.
This bound may be further tightened by
correcting for the estimated bias. This bias-corrected bound is shown to approximately control 
type-\mbox{I} error in a simulation, although the nominal confidence level is not always guaranteed. 
Simulations and applications to omics data showed that both bounds are less conservative than a 
bootstrapped confidence bound. Learn2Evaluate appears to have a lower RMSE of performance point estimates 
than 10-fold cross-validation and leave-one-out bootstrapping. 
Furthermore, Learn2Evaluate comes with some additional benefits by providing 
a dynamic comparison between learners and insight in the potential benefit of future samples.

One practical limitation of Learn2Evaluate is its computational time
because of the relatively large number of splits in a training and
test set. This may be a drawback in medium to large
sample size settings, because computationally more efficient (asymptotic)
methods often suffice \citep{LeDell2015ComputationallyEC}. However,
for small sample sizes, other resampling techniques also need to generate a large
number of splits in a training and test set to yield a reliable performance estimator 
\citep{jiang2007comparison,jiang2008calculating,kim2009repeatedcv}.

In the simulations and the applications, we focused on high dimensional data settings with sample sizes
ranging from $N=75$ to $N=200$. We focused on omics data 
by estimating the covariate correlation structure from an omics experiment and by
imposing a dense structure on the parameter vector $\boldsymbol{\beta}.$
However, Learn2Evaluate applies to any data setting and
covariate-response structure. Evaluating the practical usefulness
of Learn2Evaluate for such generalizations requires additional simulations.

Results in this study suggest that the power law, which uses all subsample sizes equally, obtains better 
point estimates than the regression spline, which predominantly uses training
set sizes at the end of the learning trajectory to extrapolate. It is therefore interesting
to investigate which subsample sizes are most informative for estimating the performance at the full sample size.
This may lead to an optimal weighting scheme in the nonlinear least squares fit by an inverse power law, which may improve our point estimates further.
Finding an optimal weighting scheme, however, is nontrivial. The variance of the repeated hold-out estimates is difficult to asses because it depends on the complex correlation structure between overlapping training
sets \citep{NoUnbiased}. We therefore leave this for future research.

\section*{Declaration of Interest}
\label{sec:Declaration of Interest}

The authors declare no competing interests.

\section*{Acknowledgement}
\label{sec:Acknowledgement}

The authors gratefully acknowledge the financial support by Stichting Hanarth Fonds.

\appendix
\setcounter{table}{0}
\renewcommand{\thetable}{\Alph{section}\arabic{table}}
\section{Average True AUC}
\label{sec:True Average AUC}
\begin{table}[H]
\centering{}\caption{\footnotesize{Average true AUC for the different simulation settings and learners}}
\begin{tabular}{cccc}
\hline 
\multicolumn{1}{c}{} & \textbf{\footnotesize{}Ridge} & \textbf{\footnotesize{}Lasso} & \textbf{\footnotesize{}RF}\tabularnewline
\hline 
\multirow{1}{*}{{\footnotesize{}$N=100,$ $\nu=1000$}} & {\footnotesize{}0.77} & {\footnotesize{}0.71} & {\footnotesize{}0.75}\tabularnewline
\multirow{1}{*}{{\footnotesize{}$N=100,$ $\nu=100$}} & {\footnotesize{}0.88} & {\footnotesize{}0.83} & {\footnotesize{}0.87}\tabularnewline
\multirow{1}{*}{{\footnotesize{}$N=200,$ $\nu=1000$}} & {\footnotesize{}0.78} & {\footnotesize{}0.74} & {\footnotesize{}0.76}\tabularnewline
\multirow{1}{*}{{\footnotesize{}$N=200,$ $\nu=100$}} & {\footnotesize{}0.89} & {\footnotesize{}0.86} & {\footnotesize{}0.87}\tabularnewline
\hline 
\end{tabular}
\label{tab:True-AUC-results}
\end{table}
\setcounter{table}{0}
\section{Average Training Set Size }
\label{sec:Training set size}
\begin{table}[H]
\centering{}
\caption{\footnotesize{Average $n_{opt}$ for ridge regression, lasso regression, and random forest (RF).}}
\begin{tabular}{cccc}
\hline 
 & \multicolumn{1}{c}{\textbf{\footnotesize{}Ridge}} & \multicolumn{1}{c}{\textbf{\footnotesize{}Lasso}} & \multicolumn{1}{c}{\textbf{\footnotesize{}RF}}\tabularnewline
\hline 
\multirow{1}{*}{{\footnotesize{}$N=100,$ $\nu=1000$}} & {\footnotesize{}32 } & {\footnotesize{}45 } & {\footnotesize{}33 }\tabularnewline
\multirow{1}{*}{{\footnotesize{}$N=100,$ $\nu=100$}} & {\footnotesize{}36 } & {\footnotesize{}54 } & {\footnotesize{}34}\textbf{ }\tabularnewline
\multirow{1}{*}{{\footnotesize{}$N=200,$ $\nu=1000$}} & {\footnotesize{}69 } & {\footnotesize{}101 } & {\footnotesize{}76 }\tabularnewline
\multirow{1}{*}{{\footnotesize{}$N=200,$ $\nu=100$}} & {\footnotesize{}76}  & {\footnotesize{}109 } & {\footnotesize{}69}\textbf{ }\tabularnewline
\hline 
\end{tabular}
\label{tab:nopt power law} 
\end{table}

\section{Supplementary material}
\label{sec:Supplements}

Section 1 presents a method to deal with S-shaped learning curves, 
which may arise as a consequence of a phase transition. Figure S1 shows 
an instance of a S-shaped learning curve. In Section 2, we show asymptotic 
variance estimators for the AUC and the PMSE. If such an estimator is not available 
for the given performance metric, we describe an alternative method in Section 3.
Table S1 shows coverage results for this alternative and in Section 4, coverage results for 
the learning curve fitted by a constrained regression spline are presented. In Section 5, we show boxplots depicting 
the distance of the confidence bounds of Learn2Evaluate to the true performance (Figures S2, S3, S4, and S5).
Section 6 deals with an additional classification application (Figure S6) and in Section 7, Learn2Evaluate is applied in 
a regression setting. Table S3 (Section 8) shows the stability of the point estimates of Learn2Evaluate with respect to the length of the learning trajectory.
Finally, Section 9 gives additional details on the data and the software.

\end{document}